\documentclass[english,published]{programming}

\usepackage[backend=biber]{biblatex}
\usepackage{booktabs}
\usepackage{multicol}
\usepackage{ccicons}
\usepackage{todonotes}
\usepackage{amsthm}
\theoremstyle{remark}
\newtheorem{prop}{Property}

\newtheorem{theorem}{Theorem}[section]
\lstdefinelanguage[programming]{TeX}[AlLaTeX]{TeX}{%
  deletetexcs={title,author,bibliography},%
  deletekeywords={tabular},
  morekeywords={abstract},%
  moretexcs={chapter},%
  moretexcs=[2]{title,author,subtitle,keywords,maketitle,titlerunning,authorinfo,affiliation,authorrunning,paperdetails,acks,email},
  moretexcs=[3]{addbibresource,printbibliography,bibliography},%
}%
\newcommand*{\CTAN}[1]{\href{http://ctan.org/tex-archive/#1}{\nolinkurl{CTAN:#1}}}

\newcommand{\tlapluspdfpage}[2]{
  \begin{center}
  \includegraphics[width=1.0\textwidth,page=#2,trim=4.5cm 4cm 4cm 4cm, clip=true]{#1}
  \end{center}
}

\newcommand{\tlaplusnospace}{TLA$^{+}$}
\newcommand{\tlaplus}{TLA$^{+}$ }

\usepackage[normalem]{ulem}

\paperdetails{
  perspective=engineering,
  area={General Programming, Program Verification, Modularity and separation of concerns},
  license=cc-by-nc-nd
}

\begin{document}

\title{Designing a Producer-driven Stream Protocol by Formal Refinement}

\author[a]{Erick Lavoie}[0000-0002-4020-6578]
\authorinfo{is a post-doctoral researcher in the Computer Networks Group. Contact him at
  \email{erick.lavoie@unibas.ch}.}
\affiliation[a]{Department of Mathematics and Informatics, University of Basel, Switzerland}

\keywords{concurrent programming, stream programming, event-driven programming, fault-tolerance, object-oriented protocol} 

\begin{CCSXML}
<ccs2012>
   <concept>
       <concept_id>10010147.10011777</concept_id>
       <concept_desc>Computing methodologies~Concurrent computing methodologies</concept_desc>
       <concept_significance>500</concept_significance>
       </concept>
   <concept>
       <concept_id>10010520.10010575.10010577</concept_id>
       <concept_desc>Computer systems organization~Reliability</concept_desc>
       <concept_significance>500</concept_significance>
       </concept>
   <concept>
       <concept_id>10011007.10011006.10011066</concept_id>
       <concept_desc>Software and its engineering~Development frameworks and environments</concept_desc>
       <concept_significance>500</concept_significance>
       </concept>
   <concept>
       <concept_id>10011007.10010940.10010992.10010998.10010999</concept_id>
       <concept_desc>Software and its engineering~Software verification</concept_desc>
       <concept_significance>500</concept_significance>
       </concept>
   <concept>
       <concept_id>10011007.10010940.10010971.10011682</concept_id>
       <concept_desc>Software and its engineering~Abstraction, modeling and modularity</concept_desc>
       <concept_significance>500</concept_significance>
       </concept>
 </ccs2012>
\end{CCSXML}

\ccsdesc[500]{Computing methodologies~Concurrent computing methodologies}
\ccsdesc[500]{Computer systems organization~Reliability}
\ccsdesc[500]{Software and its engineering~Development frameworks and environments}
\ccsdesc[500]{Software and its engineering~Software verification}
\ccsdesc[500]{Software and its engineering~Abstraction, modeling and modularity}


\maketitle

\begin{abstract}
The coroutine, defined by Conway in 1963, has broadly diffused in the practice of concurrent programming. In the Python programming language, it appears both as generators and asynchronous functions. In the Unix operating system, it appears as processes communicating through pipes. We wanted to use coroutines in Python to create single-threaded Unix-style pipelines. Unfortunately, available solutions in Python are cumbersome to use: communication with a module depends on whether it is synchronous or asynchronous, does not provide flow-control, and must implement termination with exceptions.

The push-stream protocol, a producer-driven object protocol incepted by Dominic Tarr to obtain Unix-style pipelines in JavaScript, appeared to be a good alternative. However, its specification is incomplete and ambiguous: it focuses on the static structure of objects, is unclear regarding the termination status of intermediate modules, and at times inconsistent with implementations.

What behaviours should a push-stream protocol allow? How should the protocol be specified? How can module specifications be verified for conformity? To answer these questions, we have used \tlaplus and the TLC model checker to rederive the protocol and obtain protocol-specific verification tools.

In this paper, we present a formal specification of a push-stream protocol that 1) seamlessly combines synchronous and asynchronous modules, encapsulating the choice within each module; 2) provides flow control without using bounded buffers; 3) gracefully and unambiguously terminates; 4) does not require dynamic allocation of objects on the heap. In addition to completely describing expected behaviours, our specification improves on the original design by 1) allowing the input and output of intermediate pipeline modules to terminate independently and 2) explicitly reporting when a module is pending on the execution environment, to avoid incorrect resuming. 

We specify the protocol as a sequence of refinement steps: we first define a port specification abstracting object interactions; we then refine the abstract port into a specific input-output protocol based on method calls and boolean status flags; and we finally restrict possible cross-port invocations. From this protocol, we derive by equivalence a specification of what an abstract module may do. We then refine the latter into a module checker that can verify concrete module specifications for conformity. 

We have verified the key properties of all specifications and the validity of refinement steps with TLC. In supplemental material, we provide all \tlaplus specifications, show that the protocol is sufficiently expressive to implement a superset of all original JavaScript modules, as well as a performance comparison with Python alternatives and Unix pipes.

Our work offers a new approach to Python concurrent programming. It further serves as a reference to implement push-stream modules and module checkers in other object-oriented programming languages. It finally provides the first experience report on the design of object protocols and the derivation of protocol-specific verifiers by refinement using \tlaplusnospace.


\end{abstract}

\section{Introduction}
\label{sec:introduction}

Melvin E. Conway originally introduced the concept of a coroutine in 1963, in the design of a single-pass high-performance COBOL compiler~\cite{conway1963coroutine} and defined it as "a module that is coded as an autonomous program which communicates with adjacent modules as if they were input or output subroutines". The concept was later made more precise and general in different models of computation~\cite{kahn1976coroutines,hoare1978csp,lee1995dataflowprocessnetwork} and usable in many programming languages and systems~\cite{marlin1980coroutines,ritchie1984unix-evolution,schemenauer2001python-generators,demoura2004lua-coroutines,van2004concepts,moura2009full-asymmetric-coroutines,selivanov2015python-coroutine,boduch2015javascript,elizarov2021coroutine-kotlin,chen2023structured-concurrency}.

Over a period of 15 years, the Python programming language has progressively added support for Conway's coroutines. First by introducing a simple consumer-driven streaming protocol with iterators~\cite{yee2001python-iterator}, i.e. stateful objects that incrementally produce values. Second by making iterators easier to define by introducing the "yield" statement, which turns a function into a generator -- i.e. a function that returns an iterator~\cite{schemenauer2001python-generators}.~\footnote{The expressivity gain comes from the iterator returned by the generator implicitly maintaining the state of the function executed at the last suspension point defined by the yield statement.} Third by extending generators to also support injection of values by a producer, turning "yield" into an expression~\cite{vanrossum2005python-enhanced-generators}. Fourth, by adding asynchronous support for non-generator functions using the "async/await" syntax~\cite{selivanov2015python-coroutine}. And finally, by extending generators to support the same asynchronous operations~\cite{selivanov2016python-async-generators}.

These constructions are still somewhat cumbersome when trying to define composable modules. First, the caller has to handle different protocols depending on whether the target is synchronous or asynchronous. Second, the producer-driven version of the previous protocols does not support flow control, i.e. pausing and resuming. Third, termination and error handling require raising exceptions and implementing exception handlers at the calling sites. This became quite acute when we tried to implement processing pipelines that would be as easy to define as chaining a sequence  of Unix commands with pipes~\cite{ritchie1974unix,ritchie1984unix-evolution}. We needed a protocol that could hide the complexity within the callee and better handle flow-control and termination.

The push-stream protocol~\cite{pushstream}, a producer-driven object protocol designed by Dominic Tarr for JavaScript almost a decade ago, appeared to address our issues with Python. Serendipitously, the Python asyncio lowest-level stream protocol~\cite[Stream Protocols]{asyncio-rebooted} has an equivalent interface. However, our attempt to port push-stream to Python revealed ambiguities and holes. The original specification, in the form of a readme file~\cite{pushstream}, mostly focuses on the static structure of objects -- method signatures and public member variables -- and little of the dynamic behaviour. It is also unclear regarding how an intermediate module in the pipeline should handle termination and is at times inconsistent with the implementation~\footnote{"When writing to a stream, check the value of paused both before and after."~\cite[readme]{pushstream}. However, all implemented modules in the same repository never verify the paused status before writing. During our specification effort (Section~\ref{sec:port-state-machine}), we realized that in a linear pipeline, a module is never reactivated \textit{unless} the following module is ready to receive the value, so the implementation is correct and the specification overly cautious.}.

We completed the push-stream specification. This turned out harder than initially thought, because this protocol distributes and interleaves the responsibility of sequentially activating concurrent modules with that of ensuring that values are transferred between two adjacent modules only when both are ready. These two responsibilities are usually separately handled by a central scheduler and a communication port. We feel the effort was worth it because: 1) this solved our issues with Python and 2) the resulting protocol should be portable to other languages that support the equivalent of structs and function pointers, such as C~\cite{kernighan1988c} or Forth~\cite{rather1996evolution-forth}. To ensure correctness, we leveraged the \tlaplus formal specification language~\cite{lamport1994tlaplus} and the associated TLC model checker~\cite{lamport2002specifying-sytems}. This led to four major contributions, detailed hereafter.


First, our specification improves on the original protocol by: 1) making all possible executions explicit; 2) allowing the input of a transformer module to terminate independently of its output; 3) making explicit the pending state of an output during asynchronous value transfers, preventing confusion with the idle state in which it is waiting to be resumed. 
 
Second, we present the specification of the protocol as a sequence of refinement steps, which makes it easier to understand and verify than if it had been presented as a single artifact. The use of refinement in formal specifications is uncommon and has previously mostly been applied to consensus algorithms~\cite{tlaplus-community2016tlaplus-examples}. This paper provides a first example of using refinement to define an object protocol.
 
Third, we show how mutual refinement can be used to obtain different but equivalent specifications for conflicting practical purposes: one to describe the protocol concisely and the other to minimize implementation errors. We derive abstract modules by equivalence with the previous protocol description. The abstract modules simplify the task of specifying concrete modules because their style is close to that of concrete implementations in imperative object-oriented languages. 

Fourth, we refine abstract modules into a module checker. The module checker, when used in combination with the TLC model checker from the  \tlaplus toolbox, automatically identifies protocol violations of concrete module specifications.\footnote{The current version of the module checker may only verify concrete module specifications written in \tlaplusnospace. It should be possible and hopefully not too hard to port to run natively, thus removing the dependency on the \tlaplus toolbox.} This should make the implementation of correct modules easier.

The rest of this paper is organized as follows. We first illustrate the protocol with concrete module examples and provide a high-level introduction to \tlaplus (Sec~\ref{sec:background}). We then provide an overview of the push-stream protocol and the requirements the full specification meets (Sec~\ref{sec:protocol-overview}). We follow with the full specification of the protocol, by refinement (Sec.~\ref{sec:push-stream-spec}). We then evaluate the expressivity of the protocol and verification costs (Sec.~\ref{sec:evaluation}). We continue by relating this effort to previous work (Sec.~\ref{sec:related-work}). We then conclude with a summary and future perspective (Sec.~\ref{sec:conclusion}). Appendices provide all \tlaplus specifications discussed in the paper and a performance comparison to Python alternatives.

\section{Background}
\label{sec:background}

In this section, we illustrate the underlying concepts of the push-stream protocol with concrete module examples, followed by an overview of \tlaplus.

\subsection{Push-Stream Module Examples in Python}

Suppose we are designing an application in which end users may create streaming pipelines at run time, by connecting objects. At first, consider a \textit{producer} that sequentially outputs values $1$ to $n$, \texttt{PSeq}, and a \textit{consumer} that counts the total number of values received, \texttt{CCount}. We make the producer active and the consumer passive,\footnote{One alternative is to make the producer passive and the consumer active, with the consumer \textit{pulling} (reading) values from the producer. Another alternative is to make both actives but  allow a single value to be transferred only when both have previously \textit{pushed}-\textit{pulled}~\cite{hoare1978csp}.} i.e. the producer \textit{push} (\texttt{write}) values onto the consumer, and we implement this behaviour with a method call. In addition, we start the flow of values on the producer with a \texttt{resume} method. An implementation may look like that of Figure~\ref{fig:protocol-v1}.

\begin{figure}[h]
\begin{subfigure}[t]{0.45\textwidth}
\begin{lstlisting}[language=Python]
class PSeq:
    def __init__(self,n):
        self._i = 1
        self._n = int(n)
        self.sink = None
    def resume(self):
        while self._i <= self._n:
            self.sink.write(self._i)
            self._i += 1
\end{lstlisting}
\end{subfigure}
\hfill
\begin{subfigure}[t]{0.45\textwidth}
\begin{lstlisting}[language=Python,firstnumber=10]
class CCount:
    def __init__(self):
        self.n = 0
        self.source = None
    def write(self, v):
        self.n += 1

p = PSeq(3)
c = CCount()
p.sink = c; c.source = p; p.resume()
\end{lstlisting}
\end{subfigure}
\caption{\label{fig:protocol-v1} Push-stream protocol (v1) examples supporting \textit{writing} and \textit{resuming}.}
\end{figure}

Now suppose that rather than simply counting the values, we would like them to be appended at the end of a log file, in a \texttt{CLogFile} module. The associated file should be opened when the object is created and closed when all values have been written. We therefore extend the protocol with a new \texttt{close} method on the consumer and create a second version of the producer, \texttt{PSeq2}, in which \texttt{resume} closes the consumer when no more values will be written, as in Figure~\ref{fig:protocol-v2}.

\begin{figure}[h]
\begin{subfigure}[t]{0.45\textwidth}
\begin{lstlisting}[language=Python]
class PSeq2:
    ...
    def resume(self):
        while self._i <= self._n:
            self.sink.write(self._i)
            self._i += 1
        self.sink.close()
\end{lstlisting}
\end{subfigure}
\hfill
\begin{subfigure}[t]{0.45\textwidth}
\begin{lstlisting}[language=Python,firstnumber=8]
class CLogFile:
    def __init__(self, path):
        self._f = open(path, 'a')
        self.source = None
    def close(self):
        self._f.close()
    def write(self, v):
        self._f.write(v)
\end{lstlisting}
\end{subfigure}
\caption{\label{fig:protocol-v2} Push-stream protocol (v2)  extended with explicit \textit{closing}.}
\end{figure}

So far so good. What if we now want a \textit{transformer} module, say \texttt{TTake}, that writes to \texttt{CLogFile} up to a prefix of $m \geq 1$ values of the sequence generated by \texttt{PSeq2} and ignores the others? This module should appear as a consumer to \texttt{PSeq2} and a producer to \texttt{CLogFile}. As a first approximation, we could implement \texttt{TTake}'s \texttt{write} method such that it ignores any values after $m$; this works but is rather inefficient. We would instead want to \textit{abort} the producer once no more values are needed. A third version of the producer, \texttt{PSeq3}, uses an \texttt{abort} method to set an \texttt{ended} flag, and uses the flag to detect when it has been aborted while writing. We list an implementation of \texttt{PSeq3} and \texttt{TTake} in Figure~\ref{fig:protocol-v3}.
\begin{figure}[h]
\begin{subfigure}[t]{0.45\textwidth}
\begin{lstlisting}[language=Python]
class PSeq3:
    def __init__(self,n):
        self._i = 1
        self._n = n
        self.sink = None
        self.ended = False
    def abort(self):
        self.ended = True    
    def resume(self):
        while self._i <= self._n:
            self.sink.write(self._i)
            self._i += 1
            if self.ended:
                break
        self.sink.close()
\end{lstlisting}
\end{subfigure}
\hfill
\begin{subfigure}[t]{0.45\textwidth}
\begin{lstlisting}[language=Python,firstnumber=16]
class TTake:
    def __init__(self, m):
        self._m = int(m) # Assume m >= 1
        self.source = None 
        self.sink = None
    def abort(self):
         self.source.abort()
    def close(self):
        self.sink.close()
    def resume(self):
        self.source.resume()
    def write(self, v):
        if self._m > 0:
            self._m -= 1
            self.sink.write(v)
        if self._m == 0:
            self.source.abort()
\end{lstlisting}
\end{subfigure}
\caption{\label{fig:protocol-v3} Push-stream protocol (v3) extended with early \textit{aborting}.}
\end{figure}

In some cases, a module may not be able to immediately process values and decide to buffer values internally to process them later. The size of the buffer would be finite to bound memory usage. In order to not loose values, the module would need to \textit{pause} writing once the buffer is full and resume the source again once ready. We then extend our protocol with a \textit{paused} flag on a consumer, that is set after a write if the next should temporarily be interrupted. A bounded buffer and a fourth version of the producer that supports pausing and resuming is shown in Figure~\ref{fig:pausing-protocol}.

\begin{figure}[h]
\begin{subfigure}[t]{0.45\textwidth}
\begin{lstlisting}[language=Python]
class PSeq4:
    def __init__(self,n):
        self._i = 1
        self._n = n
        self.sink = None
        self.ended = False
    def abort(self):
        self.ended = True    
    def resume(self):
        while self._i <= self._n:
            self.sink.write(self._i)
            self._i += 1
            paused = self.sink.paused
            if self.ended or paused:
                break
        if self._i > self._n or self.ended:
            self.ended = True
            self.sink.close()
\end{lstlisting}
\end{subfigure}
\hfill
\begin{subfigure}[t]{0.45\textwidth}
\begin{lstlisting}[language=Python,firstnumber=19,morekeywords={upon}]
class TBoundedBuffer:
    def __init__(self, m):
        self._max = m
        self._q = deque()
        self.paused = False
        self.source = None
        self.sink = None
    def abort(self):
        self._q.clear()
        self.source.abort()
    def close(self):
        if len(self._q) == 0:
            self.sink.close()
    def resume(self):
        while len(self._q) > 0:
            self.sink.write(self._q.popleft())
            if self.sink.paused:
                return
        if not self.source.ended:
            self.source.resume()
        else:
            self.sink.close()
    def write(self, v):
        self._q.append(v)
        if len(self._q) == self._max:
            self.paused = True
\end{lstlisting}
\end{subfigure}
\caption{\label{fig:pausing-protocol} Push-stream Protocol (v4) extended with \textit{flow control}.}
\end{figure}


Now suppose that we would like a transformer \texttt{TSubprocess}, that wraps an operating system process's standard input and output.  The Python asyncio subprocess module provides an API with similar functionalities our protocol supports so far, except that the output of the process is produced \textit{asynchronously}, i.e. the pipeline's stack of method calls needs to unwind first to let the event loop process pending events and eventually trigger a callback to receive new values. 

One problem that arises, and is shared with the JavaScript push-stream protocol~\cite{pushstream}, is that it is now no longer possible to distinguish between a source that has been previously resumed and is \textit{pending} on asynchronous behaviour and another that has never been resumed. We therefore add a \textit{pending} flag on a producer (or transformer) that is true in the first case and false otherwise.\footnote{A module may also remember whether it has previously invoked \texttt{resume} on its source: that would be equivalent of putting the pending flag on the consumer instead.}

A second problem is that asynchrony increases the possibilities of inter-leavings of method calls with possibilities of some never triggering or triggering more than once. The reason is that the \texttt{resume} method may not always be active once termination is requested. For example, in \texttt{PSeq4}, an \texttt{abort} may happen after \texttt{resume} has returned because the sink was paused but not all values have been produced: this implies that \texttt{self.sink.close()} (l.18) will not be invoked. This may prevent closing of the rest of the pipeline. One alternative could be to close immediately the sink during an abort, and remove the "\texttt{or self.ended}" condition on l.16. However, in the synchronous case where an abort happens while the last value is written ($i=n$), the sink would be redundantly closed again. What we would like instead is to guarantee that \texttt{abort} and \texttt{close} are always called exactly once: we therefore add \texttt{closed} and \texttt{ended} flags to respectively track whether the input and output have been closed already.

The protocol is now quite complex with four boolean flags (\texttt{closed}, \texttt{ended}, \texttt{pending}, and \texttt{paused}) and four methods (\texttt{abort}, \texttt{close}, \texttt{resume} and \texttt{write}) for every pair of connected modules. What states and inter-leavings of module activations and method calls should be valid and why? And how could we verify that a module correctly implements the protocol? The rest of this paper explains the key concepts and the overall approach we used to specify the protocol and derive a module checker, using state machines and properties defined on behaviours written in \tlaplus.

\subsection{\tlaplus}

\tlaplus is a formal language designed by Leslie Lamport and originally introduced in the 90s~\cite{lamport1994tlaplus} to help specifying systems, with a focus on distribution and concurrency. The core semantics of the language are based on long-established mathematical concepts -- including boolean and predicate logic, integer arithmetic, set theory, and mappings -- and use Zermelo-Fraenkel set theory~\cite{fraenkel1973foundations} as a common foundation. This core is extended to support operators from Pnueli's temporal logic~\cite{pnueli1977temporal} to write safety and liveness properties. In more common terms, \tlaplus is a language for specifying concurrent programs as abstract state machines and mathematically reason about the executions that are consistent with them. In our experience, compared to other formal languages we briefly tried, such as Agda~\cite{bove2009brief}, writing specifications in \tlaplus is conceptually much closer to programming in JavaScript or Python.  This can be explained partly because \tlaplus is dynamically typed and partly because many programming constructs in those languages are relatively easy to map to \tlaplus concepts.

Explaining the syntax and semantics of \tlaplus  is beyond the scope of this paper. A reading proficiency of all the specifications on which the main part of the paper is based (Appendix~\ref{appdx:abstract-port-spec} and after) can be achieved relatively quickly using the excellent video course by Lamport~\cite{lamport2021tlaplus-video-course}. A longer and more detailed discussion on the design principles of \tlaplus is available in a recent book~\cite{lamport2026science}. Reading proficiency of \tlaplus is however not necessary to understand the main part of this paper: our presentation focuses on the concepts and the overall refinement process we used for the design, which are illustrated with minimal references to the \tlaplus formulation\footnote{We took great care in ensuring the figures are consistent with the specifications they illustrate, but in case of discrepancies, the latter should prevail.}.

The properties of a \tlaplus specification, i.e. statements that are true of all possible executions, may be verified both using the TLC model checker~\cite{lamport2002specifying-sytems} and the TLAPS proof assistant~\cite{chaudhuri2010tlaps}. The former may exhaustively verify a (finite) set of behaviours, i.e. sequences of states representing possible executions, using a configuration that makes the specification executable and, if necessary, constrains the state space. The second converts \tlaplus formulas and proof steps to formats understood by backend verifiers -- including Isabelle, Zenon, and an SMT solver -- and reports whether the solver could show they are true. In practice, the effort of verifying a specification with the model checker is at least an order of magnitude lower than proving the same property with the proof assistant, so the former is always used as a first step. Moreover, the proof assistant currently has very limited support for temporal reasoning, so verifying temporal properties mandates the use of the model checker.  Third, our specifications are sufficiently abstract to admit only a finite state space, with minimal or no artificial restrictions, so verification with the model checker is essentially sufficient.

\section{Overview of the Push-Stream Protocol}
\label{sec:protocol-overview}

An object protocol is a set of conventions for how to structure modules as objects, as well as when and how they can interact with each other. This paper focuses on giving a high-level but detailed description of a push-stream protocol and the approach we took to specify it. A practical implementation may enforce conformance, e.g., during execution, type-checking, or testing but such mechanisms are orthogonal to our discussion and will be covered in future work. 

An overview of the protocol is illustrated in Figure~\ref{fig:port-protocol}. We limit our presentation to linear streaming pipelines: such pipelines have a single producer and a single consumer which by convention are respectively located at the beginning and the end of the pipeline, with an arbitrary number of intermediate transformers in-between. Values flow from the producer towards the consumer. By further convention, we say that an element of the pipeline is located upstream when it is closer to the producer, and another element is located downstream when it is closer to the consumer. More complex processing graphs may be created by implementing modules with multiple inputs or outputs, but these are out-of-scope and may require extending the protocol.

\begin{figure}[htbp]
\begin{center}
\includegraphics[width=1.0\textwidth,page=4,trim=0cm 8cm 0cm 0cm, clip=true]{figures/ProtocolStateMachines}
\caption{\label{fig:port-protocol} Streaming Pipeline and Port Protocol Components}
\end{center}
\end{figure}

Modules have the following static structure. A module with an input port defines \texttt{write} and \texttt{close} methods, \texttt{paused} and \texttt{closed} boolean status flags, as well as a \texttt{source} reference to the upstream module (null if not connected). A module with an output port defines \texttt{resume} and \texttt{abort} methods, \texttt{ended} and \texttt{pending} boolean status flags, as well as a \texttt{sink} reference to the downstream module. All modules also define an \texttt{error} flag, possibly containing an error code or exception that arose during execution.

We present different aspects of the protocol in the next sections, by relating them to the static structure of modules. We also state the properties that the protocol provides for all possible executions.

\subsection{Modularity} 

The most apparent and useful aspect of the protocol is that it enables splitting complex processing pipelines in reusable and independent modules. To do so, it must guarantee:

\begin{prop}
 \label{rq:composability}  \textit{Composability}: Any two modules that correctly implement the protocol individually correctly implement the protocol when connected.
\end{prop}

This makes the verification of conformity of arbitrary pipelines linear in the number of possible modules rather than exponential in the number of module combinations. This also enables defining the behaviour of a complex module by combining elementary modules. To obtain composability, we enforce (see Theorem~\ref{thm:enc-implies-composability}):

\begin{prop}
 \label{rq:encapsulation}
  \textit{Encapsulation}:
Each module $m$ interacts only with adjacent modules, with each solely through the protocol port it shares.
\end{prop}

The main focus of our specification is on the interactions between adjacent modules with minimal requirements on their internal structure or behaviour. For example, we allow modules to have internal state, wrap input-output protocols from other libraries, call internal functions, and dynamically allocate memory for internal operations. However, we disallow modules to communicate through global variables or to inspect the state of non-adjacent modules.

Two adjacent modules are connected by an \textit{input-output port} (Figure~\ref{fig:port-protocol}): the output belongs to the module upstream and the input to the module downstream. Modules may be connected in any order. The entire pipeline forms a doubly-linked list and we say that a pipeline is complete when there are no unconnected inputs or outputs. 

The behaviour of a module is implemented in its port methods. The processing of a single value is done by the \texttt{write} method on the input, which accepts a single argument $v$, the value to be processed. A write is initiated by the output, we therefore say that the output \textit{pushes} values onto the input. The type of values is irrelevant and the values transferred are abstracted in all following specifications.\footnote{Concrete module specifications, mentioned but not presented in this paper, do explicit the values written.}

\subsection{Execution Model}
\label{sec:execution-model}

We say that a module is active when one of its methods or internal functions is currently executing. Modules activate in sequence, one at a time:
 
  \begin{prop}
 \label{rq:coop-activation} \textit{Cooperative Activation}: 
 A module is activated either by the execution environment if no module is currently active, or from an adjacent module, if one is active. An active module only becomes inactive when invoking another module's method, or returning from one of its methods. Invoking or returning from a method invocation deactivates the origin and activates the target.
 \end{prop}
 
Method execution can therefore never be interrupted, which simplifies reasoning.  However, a single incorrect module may deadlock the entire pipeline, which makes the verification of modules paramount.  This execution model is compatible with event loops as available in JavaScript or Python asyncio~\cite{asyncio-rebooted}. It can also be compatible with multi-threading in Python within modules when the Global Interpreter Lock (GIL) is active (default), because the GIL guarantees only one Python thread is executing at a time~\cite{python-gil}. This does not hold when deactivating the GIL while using multi-threaded code~\cite{gross2023optionalgil}, but use of the latter is still experimental at the time of writing.

\subsection{Synchronization}

Since different modules may wrap interfaces to different concurrent components of the execution environment, these components may not be ready at the same time. Taking inspiration from Communicating Sequential Processes~\cite{hoare1978csp}, we therefore enforce:

\begin{prop}
 \label{rq:synchronization}  \textit{Synchronization}: A value is transferred (written) from a module to an adjacent module downstream, only if both are ready.
\end{prop}

One case of synchronization is:
\begin{prop}
 \label{rq:flow-control}  \textit{Flow-Control}: While writing, an input may interrupt the flow of values upstream by signalling that it is no more ready.
\end{prop}

Flow-control allows a module to communicate internal limitations, e.g. the input buffer is full or an internal channel is not ready. An input signals upstream it is no more ready by setting \textit{paused} (to true) while writing. 

Another case of synchronization is when an output is pending on the execution environment to produce its next value. The output signals that it is not yet ready by setting \texttt{pending} while resuming.

Both cases are mutually exclusive because the output cannot stay pending while the input is paused because a value could be produced before the input is ready again. When the output is pending, a port will normally reactivate first on the output; similarly for the input when paused. A port side that deactivates while ready, e.g. because it called or returned from a method,  must stay ready until reactivation.


Synchronization precludes the need for a bounded buffer~\cite{dijkstra123coop-seq-procs} in ports. It is nonetheless still possible to interpose a bounded buffer as a module: this will affect performance but not correctness. When not using buffers however, flow-control signals must exhibit:

\begin{prop} 
 \label{rq:transitive-interruption}  \textit{Transitivity}: Non-ready signals, i.e. pending or paused, transitively propagate through all adjacent modules that do not or cannot internally buffer values.
\end{prop}

This happens downstream when pending and upstream when pausing.

\subsection{Termination}

During execution, the protocol should show:

\begin{prop}
 \label{rq:leniency} \textit{Leniency}: Any module may terminate at any time, normally or due to an error.
\end{prop}

To terminate, a module sets its termination flags: i.e. \texttt{closed} on its input and/or \texttt{ended} on its output. Termination then propagates with:

\begin{prop}
 \label{rq:gracefulness} \textit{Gracefulness}: A terminated module must eventually trigger termination on all adjacent modules.
\end{prop}

This is possible even when the adjacent output is pending or the adjacent input is paused because of leniency (Prop.~\ref{rq:leniency}). A module gracefully terminates by invoking \texttt{close} on an adjacent input and \texttt{abort} on an adjacent output after having set its own termination flags. A module does not need to immediately terminate on all its ports: e.g. it may receive a collection of values on its input port, be closed from upstream, then output all items individually before ending its output and closing downstream.

Termination is final:

\begin{prop}
 \label{rq:irreversibility} \textit{Irreversibility}: Once a module has terminated, it remains terminated.
\end{prop}

This greatly simplifies the design.  And optionally, to help debugging, the protocol may allow:

\begin{prop}
 \label{rq:auditability} \textit{Auditability}: All errors are recorded in the module they occurred and may be inspected after termination.
\end{prop}

Modules may record the cause of termination, or exceptions that arose during the termination process. Auditing does not affect the interactions between modules so we omit discussion of the \texttt{error} status in our specifications.

\subsection{Memory Requirements}

For efficiency, protocol interactions are:

\begin{prop}
 \label{rq:stack-based} \textit{Method-based}: Protocol interactions may allocate memory only on the stack in the form of method invocations.
\end{prop}

Modules may still dynamically allocate memory for internal operations.

\section{Specification by Refinement}
\label{sec:push-stream-spec}

While the previous properties may appear quite intuitive, there is significant complexity that arises because of: 1) the possible combinations of values of the \texttt{closed}, \texttt{ended}, \texttt{error}, \texttt{paused}, and \texttt{pending} flags on each port; 2)  the stack state since the \texttt{write}, \texttt{resume}, \texttt{close}, and \texttt{abort} methods may invoke each other across ports, 3) the possibility for modules to become activated either on a method invocation or method return, and 4) the possibility of pending modules to reactivate in arbitrary orders. Naively reasoning about the states of a processing pipeline is unwieldy. 

The main goal of specifying a protocol is to tame that complexity. The main design difficulty resides in making the protocol permissive enough to minimize restrictions for module implementers while restricting the possible behaviours sufficiently to make the behaviour of the entire pipeline easy to predict for all possible executions.

In the next sections, we develop the full specification of the push-stream protocol in three  refinement steps resulting in three progressively more concrete specifications each adding additional concepts and/or restrictions on the previous more abstract specifications. They are shown in the upper half of in Figure~\ref{fig:specification-overview}.
Specifying in three steps rather than one simplifies understanding. 

\begin{figure}[htbp]
\begin{center}
\includegraphics[width=1.0\textwidth,page=8,trim=0cm 0cm 0cm 0cm, clip=true]{figures/ProtocolStateMachines}
\caption{Specification Overview. Refinement steps are shown with implication (i.e. $\Leftarrow$) and equivalence is shown with double implication (i.e. $\Leftrightarrow$). Key concepts introduced in each step are listed in square boxes.}
\label{fig:specification-overview}
\end{center}
\end{figure}

While Lamport and Merz orginally wrote that "[i]n practice we rarely care about checking equivalence of specifications"~\cite[p18]{lamport2017auxiliaryvariablestla} our experience suggests it may not be that rare. The lower half of Figure~\ref{fig:specification-overview} shows one instance of two equivalent specifications in which one version is easier and shorter to specify (i.e. "Cross-Port Restrictions") while the other is easier to refine into a practical implementation (i.e. "Abstract Modules"). This provides a concrete instance of equivalence that enables complementary specifications serving conflicting practical goals. 

The last two refinements at the bottom right of Figure~\ref{fig:specification-overview} respectively provide specifications of concrete modules to simplify later manual translation into imperative object-oriented languages, and a verifier to ensure their conformity with the protocol.


The development process did not follow the presentation order: we sometimes had to factor out abstract concepts from an overly complicated concrete specification; other times we had to restart from abstract principles to derive more concrete aspects. We also had to question original design choices from example JavaScript modules~\cite{pushstream}. The main lesson is that formal methods do not make the design process any less exploratory. They do help nonetheless to evaluate quickly and systematically candidate solutions and make the final result easier to verify.

\subsection{Abstract Port}
\label{sec:abstract-port}

The first and most abstract specification describes the behaviour of ports, which abstracts the specifics of what happens at the interface of two modules. The resulting state machine is illustrated in Figure~\ref{fig:port-abstract-state-machine} with the corresponding TLA+ specification listed in Appendix~\ref{appdx:abstract-port-spec}.

\begin{figure}[htbp]
\begin{center}
\includegraphics[width=0.5\textwidth,page=7,trim=10cm 4.5cm 8cm 4.5cm, clip=true]{figures/ProtocolStateMachines}
\caption{\label{fig:port-abstract-state-machine} State machine describing the behaviour of a single port. The same colour scheme is used in Fig.~\ref{fig:port-state-machine} to illustrate refinement, i.e., how one state of the abstract port behaviour is implemented by one or more states of the input-output port behaviour.}
\label{default}
\end{center}
\end{figure}

A port is initialized either in the \textit{Resumable} state, if it is the last port of a pipeline, or in the \textit{Paused} state, for any other port. A resumable port may \textit{resume}, in which case it becomes \textit{Writing}. During writing, some values may be transferred between the corresponding output and input. At some point, the input may not be able to accept more values, in which case the port starts \textit{Pausing} until it is eventually \textit{Paused}. While pausing or paused, no writing should occur. Once the input is ready to accept more values, it unpauses the port, which becomes resumable again. From any of the \textit{Resumable}, \textit{Writing}, \textit{Pausing}, and \textit{Paused} states, the port may start \textit{Terminating} until it becomes \textit{Terminated}. Once terminating, no writing should occur again and once terminated, the port should stay terminated. This completes what a port is allowed to do, i.e. its safety property~\cite{lamport2026science}.

In addition, to guarantee progress, we assume all ports infinitely often take one of any possible action. To guarantee termination, we assume all ports eventually start terminating. Both are liveness properties~\cite{lamport2026science}.

\subsection{Abstracting the Stack State}
\label{sec:abstracting-stack-state}

Before presenting our next refinement, we discuss how the activation stack that is used to implement function and method calls in programming languages can be abstracted to focus on the relevant interactions between the corresponding inputs and outputs. 

One difficulty is that the implementation of modules may call functions or methods not part of the protocol specification. Our specification should not restrict nor mandate it. The next two specifications therefore ignore activation frames of functions and methods that do not belong to the protocol while preserving the order of the others.

An additional difficulty is that the methods of one port may be called as part of the implementation of the methods belonging to a different port, which in effect, results in the activation stack interleaving the activation of multiple ports. However, because of encapsulation (Prop.~\ref{rq:encapsulation}), we need a way to disentangle those activations to be able to reason about each port independently. 

\begin{figure}[htbp]
\begin{center}
\includegraphics[width=1.0\textwidth,page=5,trim=0cm 8cm 0cm 2cm, clip=true]{figures/ProtocolStateMachines}
\caption{Stack state of a simple Producer-Transformer-Consumer pipeline after a transitive resume with nested transitive write and further nested transitive abort.}
\label{fig:stack-state}
\end{center}
\end{figure}

Figure~\ref{fig:stack-state} illustrates one example showing a simple processing pipeline composed of a producer $P$, followed by a transformer $T$, and ending in a consumer $C$. This pipeline has therefore two input-output ports: one connecting the output of the producer to the input of the transformer and one connecting the output of the transformer to the input of the consumer. On the right, the activation stack for the entire pipeline is illustrated, after the following sequence of method invocations: 1) $T$'s output was resumed from the environment; 2) during $T$'s resuming, $T$ resumes $P$'s output; 3) during $P$'s resuming, it generates a first value \textit{v1\_P} that is written into $T$; 4) during $T$'s processing of \textit{v1\_P}, $T$ writes a modified value \textit{v1\_T} into $C$; 5) during $C$'s processing of  \textit{v1\_T}, $C$ decides that it requires no more values and aborts $T$'s output; 6) during $T$'s aborting, $T$ aborts $P$. The activation stack is shown before $P$ returns from aborting.

We disentangle the activation stack by considering, for each port, only the method activations of the corresponding input and output, while preserving the relative order of activation. This is illustrated at the bottom of Figure~\ref{fig:stack-state}. Conceptually, we can think of each port having their own activation stack executing concurrently with the others. The module whose method is currently at the top of the activation stack of the port, and the associated input or output, are considered active. A module may become active either after one of its method is invoked or after a method it invoked returns.  The input or output is considered active in a port even if the topmost method on the activation stack for the entire pipeline belongs to a different port. 

To simplify the presentation in the next refinement, we only mention the method name (and not the module on which it was called nor the arguments) when specifying the activation stack of a port. The rest can be deduced from the context because the input and output use different names for status variables and protocol methods. For example, the activation stack for Port 2 can therefore be summarized from top to bottom as \texttt{Abort}, \texttt{Write}, and \texttt{Resume}.

\subsection{Input-Output Port supporting both Synchronous and Asynchronous Writing}
\label{sec:port-state-machine}

We now describe the behaviour of a port's input and output. This refinement explicates a number of concepts: 1) the activation order of methods and the state of boolean flags; 2) which of the input or output (or none) is active in each state; 3) the expected sequence of steps required to pause or terminate, 4) how synchronous and asynchronous writing modes are supported, and 5) when a port may yield control to enable other ports or the execution environment to make progress. 

Possible states and transitions are illustrated in Figure~\ref{fig:port-state-machine} with a colour scheme that is consistent with Figure~\ref{fig:port-abstract-state-machine}, to highlight how a single state of the port specification abstracts a sequence of steps taken in the more concrete input-output specification. The corresponding TLA+ specification is listed in Appendix~\ref{appdx:io-port-spec}. We present the specification first by its initialization states and then according to the three main sequences of steps: synchronous writing and pausing, asynchronous writing and pausing, and termination.

\begin{figure}[htbp]
\begin{center}
\includegraphics[width=1.0\textwidth,page=6,trim=0cm 0cm 0cm 0cm, clip=true]{figures/ProtocolStateMachines}
\caption{\label{fig:port-state-machine} Port state machine showing possible input and output states and transitions. Each state is shown with a square box with rounded corner. The abstract stack state of the port is shown within each box with the top of the stack above, in a colour that matches the abstract state of the port specification. The background colour of each state indicates which of the input (blue), output (green) or none (gray) is active in that state, which determines the possible termination sequence that can be initiated from that state. Sets of states that share a similar status flag, e.g. pausing states always have paused set to true (written $\square \texttt{paused}$), are enclosed within dotted regions, which may overlap. All non-terminating states in the upper half of the figure negate the terminating conditions, i.e. \texttt{ended} and \texttt{closed} are never true and \texttt{Abort} and \texttt{Close} are never active.}
\label{default}
\end{center}
\end{figure} 

First, consistent with the abstract port specification, the last port is initialized in the \textit{Resumable} state: i.e. the input is not \texttt{paused} nor \texttt{closed}, the output is not \texttt{ended} nor \texttt{pending}, and none of the \texttt{abort}, \texttt{close}, \texttt{resume}, and \texttt{write} methods are active. For clarity, the \texttt{closed} and \texttt{ended} status flags are not listed for all non-terminating states (upper half of the figure): they are implied to be the opposite as when terminating, i.e. always both false. All other ports but the last are initialized \texttt{Paused}, that is identical to \texttt{Resumable} except that their input is \texttt{paused} and must unpause before resuming.

To resume synchronous writing, the input invokes the \texttt{resume} method on the output, which activates the output and places the resume method on top of the activation stack. While resuming, the output may invoke \texttt{write} on the input to transfer a value and activate the input to process it. After completing the processing of the value, the input may either return immediately -- ceding control back to the output and waiting for the next value to be written -- or after setting \texttt{paused} (to true) to interrupt further writing. In the first case, the output may immediately write another value, while in the second the output must cede control to the input by returning from the \texttt{resume} method and wait to be resumed again.

To resume asynchronous writing, the sequence is similar, except that the output sets \texttt{pending} and will return from \texttt{resume} before invoking \texttt{write}. This enables the output, e.g., to register a callback that will be invoked later to receive a message from an external channel. Returning from the \texttt{resume} method is necessary to cede control to the execution environment. The difference during asynchronous writing is that a \textit{write} is not invoked from the \texttt{resume} method but from the function that will be activated when the pending event fires. The input may distinguish between both cases by inspecting the \texttt{pending} flag on the output: i.e. if false the output is synchronously writing otherwise it is asynchronously writing.  If the input pauses during asynchronous writing, the output must stop pending on external events, e.g. by pausing the source of events or unregistering callbacks, before ceding control back to the input. This returns the port to the same \textit{Paused} state as for the synchronous case. The output may choose during execution whether to write synchronously or asynchronously and may decide differently for each value.

There are multiple states during which no protocol methods are active. They can be distinguished by the combination of the \texttt{paused} and \texttt{pending} values, where each combination unambiguously determines what the next step of the protocol should be. We discuss first all non-terminating behaviours. When \texttt{paused} and \texttt{pending} are false, the port is in the \textit{Resumable} state and must resume the output. When \texttt{paused} and \texttt{pending} are true, the port is pausing while the output is active so must stop pending and cede control to the input. We call the last two remaining states \textit{yield points} because these are the points at which the port cedes control to other ports upstream or downstream, or the environment. The two yield points differ by the expectation of which side the port must  reactivate on. When \texttt{paused} is true and \texttt{pending} is false, the port must reactivate from the input, unpause then resume the output. When \texttt{paused} is false and \texttt{pending} is true, the port must reactivate from the output and may start writing. If the output reactivates while paused, it should not write, while if the input reactivates when pending on a write, it should not resume the output.

We may now discuss the terminating sequence, which depends on which of the output or input is currently active.  First, when the output is active, it may invoke its own \texttt{abort} method, which should then set \texttt{ended}, or it may directly set \texttt{ended} without invoking \texttt{abort}. Before it sets \texttt{ended}, it should also unset \texttt{pending} (to false) and free up any resources that were needed for its operation. Once that is done, the output should trigger termination on the input by invoking the \texttt{close} method. During closing, the input should set \texttt{closed} and also free up any resources that were needed. After that is completed, all currently active methods should return, which results in the port being \textit{Terminated}. Alternatively, when the input is active, the sequence is similar, except that the termination order is reversed: i.e. the input closes before aborting the output. In both cases, the active part of the port should terminate first and set its termination flag, respectively \texttt{closed} or \texttt{ended}, before triggering the termination of the other part. This prevents infinite loops because the inactive part, when later activated, will not trigger termination back because the termination flag is already set. It is possible at the yield points, i.e. when either paused or pending on an asynchronous write, that the other side of the port activates first. In that case, it may still terminate the port which is necessary to ensure the protocol allows termination from any state (Prop~\ref{rq:leniency}). Termination is also always synchronous and should never pend on external events during the termination procedure.

\subsection{Restricting Interactions across Ports}
\label{sec:restricting-interactions-across-ports}

The previous input-output port specification allows all ports to execute concurrently and independently, effectively allowing any combination of port state across the entire pipeline. We now add restrictions on the valid combination of states across ports and calling context for port methods. This provides three main benefits: 1) it provides encapsulation (Prop~\ref{rq:encapsulation}); 2) it guarantees that modules terminate in a finite number of steps without requiring inspection of the stack state; 3) it removes redundant behaviours during termination to simplify the protocol. We present the concepts and properties that are enforced to achieve these goals. The corresponding formal safety properties are listed at the end of Appendix~\ref{appdx:cross-port-restrictions-spec}.

The first concept we add is a global activation stack, that we had incidentally previously abstracted in Section~\ref{sec:abstracting-stack-state}. It contains stack frames, each with the method name that was previously invoked -- but has not returned yet -- as well as the index of the module in the pipeline on which this method was called. This is necessary because the activation order between ports, determined by the invocation and return order of methods, is otherwise lost. We ensure that the global activation stack stays consistent with each port's active stack, by ensuring that 1) whenever a port activates a method, a method with the same name as well as the index of the module that corresponds to that method is added on the global activation stack; 2) whenever a method returns and is removed from a port active stack, the same method and corresponding module were on the global activation stack and are removed at the same time. The global activation stack allows the specification to add restrictions on the calling context, which we use in the next properties. Moreover, it is possible for the global activation stack to contain frames of functions or methods that are not part of the push stream protocol, i.e. \texttt{abort}, \texttt{close}, \texttt{resume}, \texttt{write}. For example, we use a special "activemodule" frame, that abstracts the activation of a module that comes from the execution environment.

Given the global activation stack, we enforce encapsulation (Prop.~\ref{rq:encapsulation}) with three conditions. First, methods may only be invoked when the module currently active -- whose method or internal callback function is currently on top of the global activation stack -- is either the module adjacent on the same port, or the same module. This prevents invocation of methods from non-adjacent modules and guarantees that during method invocation, only adjacent modules are activated. Second, status flags, i.e. \texttt{closed}, \texttt{ended}, \texttt{paused}, \texttt{pending}, may only be modified during the activation of a method of the corresponding module. This enforces that all other modules of the pipeline may only read their status, not modify it. Third, the \texttt{write} method should only be invoked by the module upstream, which follows from the assumption that modules are part of a linear pipeline in which values only flow from a producer upstream towards a consumer downstream (Sec.~\ref{sec:protocol-overview}).

The input-output protocol (Sec.~\ref{sec:port-state-machine}) introduced yield points and a final terminated state, enforcing that the port should deactivate only after the port either terminated, i.e. both the output is ended and the input is closed, or is yielding, i.e. the output is not ended and the input is not closed and either the output is pending or the input is paused. We now enforce this in three additional cases that were not considered previously. First, when a module returns from being activated by the execution environment, i.e. \texttt{activemodule} is removed from the global stack. Second, on the upstream port of a transformer when its output methods return, i.e. \texttt{resume} and \texttt{abort}. Third, and symmetrically, on the downstream port of a transformer when the transformer input methods return, i.e. \texttt{close} and \texttt{write}. This guarantees that deactivation of ports across the pipeline always leave them either terminated, in which case they never reactivate, or yielding, in which case they may reactivate on the input or output. A port is therefore never deactivated in the middle of termination.

The input-output protocol previously implicitly guaranteed termination with a finite number of steps by inspecting the activation stack of ports, and therefore preventing the same \texttt{abort} and \texttt{close} from being called more than once during termination. This is however inefficient and possibly impossible in some languages. To obtain the same property without requiring inspection of the stack, we instead restrict the caller to have terminated its side of the port(s) before terminating others. Specifically, if the \texttt{abort} method of a transformer is called on the output, the input of the same transformer must be closed. Otherwise if \texttt{abort} is called from a downstream transformer, then both the input and output of that transformer must have terminated before. Symmetrically, the same restrictions are applied in the downstream direction, i.e. when the \texttt{close} method of a transformer is called, and the restrictions applying in the opposite direction. Moreover, the same restriction also applies between the two ports of a transformer: the termination flag of one side must be set before setting the termination flag of the other side (without calling \texttt{abort} or \texttt{close}).

Finally, we enforce additional restrictions to simplify the protocol. First, \texttt{resume} and \texttt{write} should not be called during an \texttt{abort} or \texttt{close}. In synchronous mode, these are redundant because the target module may be reactivated during a method return. However, in asynchronous mode on a transformer during its input's close, we make an exception to allow pending values to be written downstream since the input will never reactivate. Second, we ensure that \texttt{write}s propagate across a transformer only when both ports are either synchronously writing or asynchronously writing, i.e. \texttt{pending} is true or false on both ports. This prevents a write from occurring on a pending transformer output port when the input has been synchronously written to, which should not arise in practice because the write would instead be triggered by an asynchronous event. Third, we also prevent unpausing or start pending during an \texttt{abort} or \texttt{close}. Fourth, we only allow pausing when a write method is active. The previous restrictions are not strictly necessary but they significantly decrease the number of cases required to support when implementing modules, while allowing the full range of interesting behaviours to still be implemented in practical modules.

This complete our presentation of the push-stream protocol, which was defined from the callee's perspective: i.e. the conditions under which the protocol methods may be invoked and returned from and the status flags changed. This had two main benefits: 1) it focused our attention on the interactions between the modules, rather than the modules themselves; 2) it made the specification of the protocol shorter because the methods may be called from multiple places with similar calling conditions. 

\subsection{Deriving Abstract Modules from the Protocol by Equivalence}
\label{sec:abstract-modules}

When implementing a module, the possible interactions need to be considered from a caller's perspective, i.e. from each of the \texttt{abort}, \texttt{close}, \texttt{resume}, and \texttt{write} methods. We consider the three cases of consumer, producer and transformer as abstract modules by only considering what is allowed to correctly implement the push-stream protocol. The result is provided in Appendix~\ref{appdx:abstract-modules-spec}.

This translation poses no special conceptual difficulty as it simply expresses the same conditions but with a different focus. While mathematically equivalent, the abstract module specification was easier to use to determine some of the cross-port restrictions, such a the conditions under which a write should be allowed (Sec.~\ref{sec:restricting-interactions-across-ports}).

 Nonetheless, some of our earliest attempts that attempted to model the abstract modules directly resulted in specifications thousands of lines long and painful debugging sessions. The two views were therefore complementary and we kept iterating on the design until specifications from both perspectives looked as simple as we could manage, both to explain the protocol and implement modules with.

\subsection{Specifying Concrete Modules}
\label{sec:concrete-module-spec}

Concrete modules, when implemented in common programming languages such as Python, execute method statements sequentially. This enables the control-flow within a method to implicitly determine pre-conditions on following statements. To replicate this behaviour, we specify the expected behaviour of concrete modules by 1) prefixing possible actions within a method with a unique index and 2) tracking the last statement that has been executed globally within each active method with a \textit{program-counter stack}, abbreviated \textit{pcstack}. This is standard practice in \tlaplus~\cite{lamport2026science}.

Using this style, the specification of concrete modules is quite close to the code one would write, e.g., in Python. We still abstract some details such as parameter passing, which includes the values being transferred between modules during a write. Previous attempts showed that these additional details added significant complexity while not contributing much to finding interesting issues.

\subsection{Verifying the Specification of Concrete Modules}
\label{sec:verifying-concrete-modules-spec}

We use the specification of abstract modules to verify that concrete modules behave according to the push stream protocol in all possible cases. To do so, we connect an abstract consumer or producer module to the port(s) of concrete modules and verify that the resulting pipeline behaves like a pipeline of abstract modules under a refinement mapping that abstracts all the details specific to the concrete modules, such as module-specific fields and stack frames. We provide the specification for this approach in Appendix~\ref{appdx:verifying-concrete-modules}. Since abstract modules and concrete modules only differ by the use of a \textit{pcstack} by the latter (Sec.~\ref{sec:concrete-module-spec}), the verifier ensures the pcstack stays consistent with the expected calling conventions when an abstract module invokes the method of a concrete module or returns from one of its methods.

When using model-checking to verify concrete modules, i.e. by exhaustively checking properties over a finite set of behaviours, the guarantees obtained depend on the set of behaviours verified and may not generalize. For example, a concrete module could correctly follow the push-stream protocol for up to the first 3 values written and later arbitrarily deviate from the protocol. Model-checking will not catch this issue unless at least four values are written. In practice however, this is not an issue because most useful modules either exhibit behaviour independent of the number of values written, or that varies according to user-parametrizable bounds, e.g. a module that counts from 1 to 3, while exhibiting the same boundary behaviour no matter the bound chosen, e.g. end when there are no more values to count. Our verification approach verifies all possible behaviours according to the initialization parameters for modules, so as long as a module's behaviour is solely determined by user-defined parameters and the behaviour of adjacent module(s), this verification approach will catch any deviations from the push-stream protocol and should provide sufficiently strong guarantees. Even if a module exhibits infinite behaviour under some parameters, e.g. count from 1 to infinity when the upper bound is null, we may still guarantee this infinite behaviour follows the push-stream protocol as long as the possible module actions are finite and the actions chosen during an infinite execution are a subset of those of the finite behaviour.

There still remains the possibility that a concrete module does not fulfill its expected behaviour, i.e. a module that is supposed to count from 1 to 3 but sometimes skips over 2, while still correctly following the push-stream protocol. Correctly specifying and verifying the behaviour of concrete modules beyond conformity to the push-stream protocol is outside the scope of this paper.

\subsection{Termination Conditions}

So far, except for a small discussion of the liveness assumptions on abstract ports (Sec.~\ref{sec:abstract-port}) our presentation has focused on safety properties, as determined by the valid sequence of states modules and the activation stack may go through. We now discuss the necessary conditions for a pipeline to terminate, which is the only liveness property we assert over the protocol. 

Note that the push-stream protocol does not mandate termination: a pipeline may be correct even if it never terminates, e.g. when using it to implement a long-running process that reacts to external messages before updating a shared local database. We now focus only on the internal conditions on pipeline modules that are necessary to ensure \textit{graceful termination} (Prop~\ref{rq:gracefulness}). 

The first condition is standard when modelling any program, sequential or concurrent: if a module may take a valid action then it must eventually do so~\cite{lamport2026science}. Formally, we assert that if any module is currently active and some of its actions remain enabled infinitely long, then it must eventually take one of these actions. This is ensured in practice simply by the semantics of the programming language in which the module is implemented: i.e. statements are executed one after the other until either the program terminates or an error occurs. Implementing a concrete module in any common programming language meets this condition automatically.

Second, we assume that if a module yields after internally pending on an event triggered by the execution environment, then the environment will eventually trigger that event and reactivate the module, allowing it to either unpause and resume upstream or write downstream. This condition is an assumption on the environment and therefore does not require anything from a module implementation per se.

Third, some modules may eventually  \textit{self-terminate}, i.e. terminate due to internal conditions -- e.g.  a counting module has counted up to the maximum number requested. Others may only terminate when explicitly requested to by another module, e.g. a map transformer that applies a function to every value it receives and otherwise waits for new values to arrive. Most of the concrete transformers we have implemented are examples of the latter category. A pipeline composed of modules only of the second category will never terminate. For a pipeline to terminate without external termination triggers, there must therefore be at least one self-terminating module.

Fourth, transformers may have two ports and the push-stream protocol allows each port to terminate independently of the other. To guarantee termination of the pipeline, a transformer must eventually start terminating on all of their ports once they start terminating on one. For most transformers we implemented, this is done as soon as the termination flags have been set on the first port terminating. In general, there might however be a delay: e.g. a flatten transformer may close its input and continue writing on its output until no more values are left to write (see Table~\ref{tb:concrete-modules}). Care must however be taken to ensure that this condition is satisfied regardless of which of the input or output terminates first, and even if termination is triggered on only one of the two.

Given that all previous four conditions are satisfied, a pipeline should always terminate. The first two conditions cannot be verified formally, since they are respectively assumptions respectively on the programming language and execution environment, but the latter two can. To verify the third condition, we use our model checker (Section~\ref{sec:verifying-concrete-modules-spec}) but assert no liveness properties that force the abstract producer or consumer to terminate. A verified module is \textit{self-terminating} if and only if the verification pipeline terminates in all cases. We verify the fourth condition by asserting that either the abstract producer or the abstract consumer must terminate (or both) but not necessarily in the same execution, and verifying that the verification pipeline including the transformer terminates in all cases. 


\section{Evaluation}
\label{sec:evaluation}

\subsection{Permissiveness}
\label{sec:expressivity}

A protocol should be sufficiently permissive to support useful modules. Moreover, the changes we have made to the original protocol should not preclude writing modules with equivalent behaviour. We have ensured both by specifying, verifying and implementing a superset of modules originally provided in JavaScript in the push-stream repository~\cite{pushstream}. We have specified the concrete modules with \tlaplus and verified them with our module checker. We have also implemented and tested each in Python. We summarize the list of modules in Table~\ref{tb:concrete-modules}.

\begin{table}[t]
\caption{\label{tb:concrete-modules} Summary of concrete modules specified in \tlaplus and implemented in Python. Module names with an asterix have no equivalent in the original push-stream repository.~\cite{pushstream}}
\begin{center}
\begin{tabular}{@{}lllr@{}} \toprule
Category   & Name & Sync./Async. & Description \\ \midrule
Producer   & Empty & Sync. & Immediately end on resume. \\ 
                  & Values & Sync. & Output iterable values. \\ 
Transformer & AsyncMap & Async. & Apply async. func. to every value. \\
                     & Batch* & Sync. & Group in lists of up to $n$ values each. \\
	            & Filter & Sync. & Output value if predicate is true. \\
                     & Flatten & Sync. & Output items of every iterable value. \\
                     & Map & Sync. & Apply func. to every value. \\
                     & Subprocess* & Async. & Connect in/out to subp. std in/out. \\
                     & Splitlines* & Sync. &  Output lines across input bytes/strings. \\
                     & Take & Sync. & Output first $n$ values. \\
Consumer  & Count* & Sync. & Count values. \\ 
                   & Collect & Sync. & Collect values. \\ 
                   & Drain & Sync. & Discard values. \\
                   & Reduce & Sync. & Reduce stream to a single value.\\
\bottomrule
\end{tabular}
\end{center}
\end{table}

\subsection{Verification with Model-Checking}
\label{sec:verification}

We verified the properties of all specifications and refinements on a Macbook Pro 2024, using an Apple M4 Pro processeur, 48 GB of RAM, running Tahoe 26.6.2, the OpenJDK (Java) 26.0.2.1, and the TLC model checker version 2026.09.17.032053 part of the 1.8.0 release of the \tlaplus command-line tool (\texttt{tla2tools.jar}).


\subsubsection{State Space and Time}

The state space size and time required for model-checking are listed in Table~\ref{tb:verification-cost-specifications}. Verifying all the specifications, refinements, and concrete modules takes less than ten minutes on a recent laptop, with the longest specification taking less than 2.5 minutes each, making the approach quite accessible.

\begin{table}[ht]
\caption{\label{tb:verification-cost-specifications} State space size and time required to verify the specifications and concrete modules.}
\begin{center}
\begin{tabular}{@{}llr@{}} \toprule
Name & Distinct States Found &  Time (s) \\ \midrule
Abstract Ports (Sec.~\ref{sec:abstract-port}, Appdx.~\ref{appdx:abstract-port-spec}) & 42  & $<1$  \\
IO Ports (Sec.~\ref{sec:port-state-machine}, Appdx.~\ref{appdx:io-port-spec}) & $11772$ & $1$ \\
Cross-Port Restr. (Sec.~\ref{sec:restricting-interactions-across-ports}, Appdx.~\ref{appdx:cross-port-restrictions-spec}) & $5504$ & $<1$ \\
Abstract Modules (Sec.~\ref{sec:abstract-modules}, Appdx.~\ref{appdx:abstract-modules-spec}) & $116307$ & $144$  \\
IO Ports $\Rightarrow$ Abstract Ports (Appdx.~\ref{appdx:refinement-io-port-to-abstract-port}) & $11772$ & $2$ \\
Cross-Port Restr. $\Rightarrow$  IO Ports (Appdx.~\ref{appdx:refinement-cross-port-restrictions-to-io-port}) & $91566$ & $10$ \\
Cross-Port Restr. $\Rightarrow$  Abstract Mod. (Appdx.~\ref{appdx:crossports-ref-abstract-modules}) & $91566$ & $18$ \\
Abstract Mod. $\Rightarrow$  Cross-Port Restr. (Appdx.~\ref{appdx:abstract-modules-ref-crossports}) & $116307$ & $141$ \\
Concrete Modules w/ Mod. Checker (Appdx.~\ref{appdx:verifying-concrete-modules}) & 13600 & 87 \\
\bottomrule
\end{tabular}
\end{center}
\end{table}

\section{Related Work}
\label{sec:related-work}

In this section, we present related work that served as inspiration and is also related in theme or techniques.

\subsection{Push- and pull-stream}

Over a decade ago, while designing and implementing Secure-Scuttlebutt~\cite{tarr2019ssb} in JavaScript, Dominic Tarr had issues with the Node.js stream API~\cite{tarr2016history-of-streams}. He addressed them by designing \textit{pull-stream}~\cite{tarr2016pullstream}, a consumer-driven function protocol based on a request-response format that imitates a callback API. This protocol, while quite successful in attracting a large number of community-contributed modules~\cite{pull-stream-modules}, required the allocation of a closure for every protocol interaction. This puts significant pressure on the garbage collector. Tarr later designed an alternative object-based producer-driven version called \textit{push-stream}~\cite{pushstream}.

The original push-stream protocol from Tarr~\cite{pushstream} deviates from our presentation in two major ways: 1) the input and output of a transformer share the same \texttt{ended} termination flag, which prevents distinguishing the termination of the input and output and may lead to writes being invoked after the output is marked as terminated; 2) the pending status of an output during asynchronous writing is kept implicit, which prevents distinguishing a pending write from the resumable (passive) state. While the protocol was used in practice, the detailed design process was also never documented and the protocol had not been formally verified.

The previous pull-stream protocol is similar to the push-stream protocol but significantly simpler to describe: i.e. the state machine required to describe the interaction between an input and output~\cite[Sec.~4.6.3 Fig.~4.26]{lavoie2020phdthesis} requires only four states and six transitions rather than the more than twenty states and transitions needed for our push-stream protocol (Sec~\ref{sec:port-state-machine}). The reason is that it basically treats every protocol interaction as asynchronous, through a callback-based request-response protocol, and does not expose the internal state of modules as status flags. However, it requires the allocation of one closure to handle the response of every interaction between modules.

\subsection{Reactive programming}
The push-stream protocol is related to reactive programming~\cite{bainomugisha2013survey,staltz2014intro-reactive-programming} in that it is also used to implement declarative processing pipelines. Academic work~\cite{bainomugisha2013survey} has emphasized a programming model based on reactive variables, whose values change in reaction to events from the execution environment, while popular programming framework directly expose and manipulate the underlying streams~\cite{staltz2014intro-reactive-programming}. This paper focused on specifying the lowest-level protocol that is used to efficiently implement modules, in a way that effectively removes the need for a scheduler by decentralizing its operations across the modules.

\subsection{Communicating Sequential Processes}
The push-stream input-output protocol (Sec.~\ref{sec:port-state-machine}) is related to the the input-output synchronization primitive from communicating sequential processes (CSP)~\cite{hoare1978csp}, albeit with explicit interleaving of deactivation/activation of adjacent modules and the input always triggering before the output. The equivalence is as follows. When the input unpauses, it marks itself ready to read a new value (equivalent to the CSP input command); when the input resumes the output, the input deactivates and transfers activation to the output (implicitly performed in CSP by the scheduler); when the output writes, it marks itself ready for writing a value (equivalent to a CSP output command); when the write method executes, the value is actually transferred between the output and the input, the output deactivates, and the input activates again (combining the CSP transfer and implicit activation of the input process by the scheduler).

\subsection{Meta-Stream Protocol}
De Troyer, Nicolay and De Meuter have proposed a meta-stream protocol~\cite{troyer2021metastreamprotocol}. Their work is orthogonal to ours: we have focused on verifying the correctness of one particular push protocol while their approach aims at providing facilities to allow users to tailor a streaming API to their needs, e.g. choosing push versus pull semantics and possibly implementing different value-propagation protocols. 

\subsection{\tlaplus and Refinement Mapping}
\tlaplus has been used to model and verify a variety of algorithms and protocols~\cite{tlaplus-community2016tlaplus-examples}. The published specifications most related to this work include a specification of a Byzantine Fault-Tolerant variant of Paxos, obtained by refinement~\cite{lamport2011byzantizingpaxos}, a specification of the TCP protocol~\cite{kuppe2026tcp-spec} as described by RFC 9293~\cite{ietf2022rfc9293}, and a specification to reproduce previous bugs in the Python asyncio lock library~\cite{alexn2021asynciobug}. The vast majority of published TLA+ specifications~\cite{tlaplus-community2016tlaplus-examples} do not need (or use) refinement and published examples are essentially related to consensus algorithms. To our knowledge we are the first to 1) design an object protocol in TLA+; 2) use refinement to verify the correspondence between the protocol description and a module checker that can verify conformity of concrete module specifications with the protocol description.

The existence of refinement mappings for any specification that can be written in TLA+ was first shown by Abadi and Lamport~\cite{abadi1991refinement-mapping}. A more accessible and practical presentation is given in a recent book by Lamport~\cite[Chapter 6]{lamport2026science} and an older paper by Lamport and Merz~\cite{lamport2017auxiliaryvariablestla}.

\section{Conclusion}
\label{sec:conclusion}

In this paper, we presented a complete specification of push-stream, a producer-driven object protocol for implementing concurrent streaming pipelines. Compared to the original design, our specification fixes two major issues, one preventing the ports of intermediate modules from terminating independently and one potentially leading to incorrect resuming of modules that are asynchronously pending. It also fully explicits the expected dynamic behaviour of modules, while the original readme description focused mostly on the static structure of modules.

In contrast to common programming libraries and frameworks based on a streaming model, the push-stream protocol relies solely on method calls and status flags on objects. It therefore does not require a scheduler, makes bounded buffers on communication ports optional, and does not require dynamic allocation on the heap for protocol interactions. It should be portable to lower-level languages and resource constrained environment.

We presented the specification of the protocol in multiple refinement steps, making each step and their associated properties easier to understand and verify than if it had been presented as a single artifact. We also derived by equivalence a definition of abstract modules that are closer to implementation code, to guide module implementers. From the abstract modules we then refined a module checker that automatically finds possible protocol violations in \tlaplus concrete module specifications. To our knowledge this is a first example of using equivalent specifications to support conflicting practical needs: i.e. concision on the one hand, and closeness to final implementation on the other.

Our results may inspire future work in different directions. The push-stream protocol could be extended to support modules with more than one input or output. A compiler could be written to automatically translate our concrete module specifications and module checker to common programming languages, thus removing the dependency on the \tlaplus toolbox for verification of module implementations. Networking protocols such as TCP and QUIC could be specified as sequence of refinement steps, to make their properties easier to understand and verify. The applicability of the protocol and possible necessary extensions to be compatible with preemptive multithreading and multi-threading in Python with the global interpreter lock deactivated~\cite{gross2023optionalgil} could also be studied.

We additionally provide the following supplementary material. Proofs of theorems are given in Appendix~\ref{appdx:proofs}. All \tlaplus specifications, including refinement steps and the module checker are given in Appendix~\ref{appdx:abstract-port-spec} to \ref{appdx:verifying-concrete-modules}. We also discuss how we ensured the correctness of specifications in Appendix~\ref{sec:correctness}, and compare the overhead of the protocol to common Python alternatives and Unix pipes in Appendix~\ref{sec:overhead}.

\acks
We would like to acknowledge the exceptional dedication and creativity of Dominic Tarr who designed the initial version of the push-stream protocol. The fact that it is practically useful  is the result of multiple years of refinement in the context of real-world deployment of peer-to-peer systems.

We thank Prof. Christian Tschudin for comments on an earlier version of this paper. We also want to specially acknowledge Prof. Christian Tschudin for providing significant freedom in choosing research problems, hiring us in a position which allowed us to teach extra-curricular seminars that contributed to deepen our understanding of the material, and finally allowing us to claim sole authorship of the paper. 

\newpage
\appendix

\section{Proofs}
\label{appdx:proofs}

\begin{theorem}
\label{thm:enc-implies-composability}
Encapsulation (Prop.~\ref{rq:encapsulation}) implies composability (Prop.~\ref{rq:composability}).
\end{theorem}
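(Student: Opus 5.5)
We argue by induction on the number of modules in a pipeline, after first making precise what "correctly implements the protocol" means. The input-output port specification (Sec.~\ref{sec:port-state-machine}) together with the cross-port restrictions (Sec.~\ref{sec:restricting-interactions-across-ports}) is a conjunction of per-port conditions. Each condition constrains only the status flags of the input and output of a single port, the methods of that port on the global activation stack, and the calling context. The calling context is always the module currently on top of the stack, which encapsulation (Prop.~\ref{rq:encapsulation}) restricts to the same module or its neighbour on that port.

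A module $m$ correctly implements the protocol if, for every environment that behaves as a correct adjacent module on each of its ports, every behaviour of $m$ satisfies the port conditions on those ports. This is exactly the check the module checker performs by plugging abstract producers or consumers onto $m$.

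First, we show that under encapsulation each port condition of port $p$ between $m_i$ and $m_{i+1}$ depends only on the projection of a behaviour onto the state of $m_i$ and $m_{i+1}$ and on their frames of the global stack. This holds for three reasons:
\begin{itemize}
\item flags may be modified only during activation of the owning module;
\item methods may be invoked only from the same or the adjacent module;
\item the projection of the global stack onto a port preserves the relative order of that port's frames (Sec.~\ref{sec:abstracting-stack-state}).
\end{itemize}
Consequently, no action of a module $m_j$ with $j \notin \{i,i+1\}$ changes any variable that a condition on $p$ reads. Every such action is therefore a stuttering step for port $p$.

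Second, the inductive step. Take two correct modules $A$ and $B$ connected on port $p$, or more generally a correct pipeline prefix and a correct module. For any behaviour of the composition, we project it onto $A$ together with its other port, and view $B$'s side of $p$ as the environment of $A$. By the first step, $B$'s internal actions and its actions on its other port are stutters on $p$. The observable actions of $B$ on $p$ are those of a correct neighbour, because $B$ itself is correct. Hence the projected behaviour is one that $A$'s correctness already covers, so $A$'s port conditions hold. The symmetric argument applies to $B$. Composing pipeline by pipeline then gives composability (Prop.~\ref{rq:composability}) for arbitrary linear pipelines.

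The main obstacle is the circularity in the second step. $A$ is guaranteed correct only against a correct environment, and $B$'s correctness is likewise conditional on $A$. This is the classic assume-guarantee problem. We plan to break it by induction on the length of the behaviour prefix, which is sound because the port conditions are safety properties. If both sides satisfy the conditions up to step $k$, then step $k+1$ is taken by exactly one active module (cooperative activation, Prop.~\ref{rq:coop-activation}). That module's environment has been correct so far, so its own guarantee applies to the step. Liveness (graceful termination) is treated separately and is not part of this theorem.
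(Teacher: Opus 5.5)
Your argument is sound in outline, but it takes a different and more careful route than the paper.

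\textbf{The paper's proof.} It argues by contradiction in a few lines:
\begin{enumerate}
\item a violation must come from some interaction;
\item by encapsulation, the only interactions between the two modules are on their shared port;
\item because each module is individually correct, those interactions are valid;
\item a correct module produces only valid interactions on all its ports in reaction to valid interactions on any of them, so no violation can arise elsewhere.
\end{enumerate}
Step 3 is exactly the circular assume--guarantee step you flag. Each module's guarantee on the shared port is conditional on the other's, and the paper simply asserts the conclusion.

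\textbf{What your route adds.} Your induction on the length of the behaviour prefix is what makes that step well founded. It rests on three facts: one module acts per step by cooperative activation, its neighbours have behaved validly up to that step, and the conditions are safety properties. This is essentially the Abadi--Lamport composition argument. Your version buys rigour at the cost of length; the paper's buys brevity. Note also that running the prefix induction over the whole pipeline at once already covers any number of modules. Your outer induction on pipeline length, and the need to treat a composite prefix as a single module, are therefore unnecessary.

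\textbf{A step to restate.} Your locality lemma is formulated per port, but the cross-port restrictions are not per-port. Two examples:
\begin{itemize}
\item A write that a transformer propagates onto its downstream port is allowed only if it matches the \texttt{pending} flag of its upstream port, which is owned by the transformer's source.
\item When a transformer's input methods return, its downstream port must be yielding or terminated, and that port's \texttt{paused} and \texttt{closed} flags are owned by its sink.
\end{itemize}
So a condition attached to a step on port $p$ can read state of a module not adjacent to $p$. Your claim that actions of non-adjacent modules are stutters for every condition on $p$ is therefore false as stated.

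What your inductive step actually uses is locality per acting module. Every obligation constrains a step of a single module and reads only that module's state, its neighbours' flags on the shared ports, and the frames of its own ports. This is the caller-perspective formulation of the abstract modules, which the paper shows equivalent to the cross-port restrictions.

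\textbf{An assumption to state.} You should make explicit the assumption behind ``the observable actions of $B$ on $p$ are those of a correct neighbour''. The checker only plugs abstract producers and consumers onto a module. You therefore need the projection onto a port of any correct module, transformers included, to be a behaviour of the corresponding abstract end module. The paper's proof relies on the same assumption without stating it.
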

\begin{proof}
Suppose by contradiction, that encapsulation does not imply composability. This is equivalent to saying that two modules that individually implement the protocol correctly (including encapsulation), do not implement it correctly when connected. This implies that some interaction between the two modules triggers a protocol violation on any of the port(s) of the two connected modules. However, because of encapsulation, the only possible interactions between the two connected modules happen on the protocol port they share. Because both modules individually implement the protocol correctly, all interactions they generate on this port are valid. And since correct modules implement the protocol iff they only generate valid interactions on all their ports in reaction to valid interactions on any of their ports, there can be no protocol violation on other ports, which contradicts the assumption. Therefore, encapsulation implies composability.
\end{proof}

\newpage
\section{Abstract port}
\label{appdx:abstract-port-spec}

\tlapluspdfpage{spec/PushStreamPorts}{1}
\tlapluspdfpage{spec/PushStreamPorts}{2}

\newpage
\section{Input-output port}
\label{appdx:io-port-spec}

\tlapluspdfpage{spec/PushStreamInputOutputPorts2}{1}
\tlapluspdfpage{spec/PushStreamInputOutputPorts2}{2}
\tlapluspdfpage{spec/PushStreamInputOutputPorts2}{3}
\tlapluspdfpage{spec/PushStreamInputOutputPorts2}{4}
\tlapluspdfpage{spec/PushStreamInputOutputPorts2}{5}
\tlapluspdfpage{spec/PushStreamInputOutputPorts2}{6}
\tlapluspdfpage{spec/PushStreamInputOutputPorts2}{7}
\tlapluspdfpage{spec/PushStreamInputOutputPorts2}{8}
\tlapluspdfpage{spec/PushStreamInputOutputPorts2}{9}

\newpage
\section{Input-output port refines abstract port}
\label{appdx:refinement-io-port-to-abstract-port}

\tlapluspdfpage{spec/PushStreamIOP2RefinesPorts}{1}

\newpage
\section{Cross-port restrictions}
\label{appdx:cross-port-restrictions-spec}

\tlapluspdfpage{spec/PushStreamCrossPortRestrictions}{1}
\tlapluspdfpage{spec/PushStreamCrossPortRestrictions}{2}
\tlapluspdfpage{spec/PushStreamCrossPortRestrictions}{3}
\tlapluspdfpage{spec/PushStreamCrossPortRestrictions}{4}
\tlapluspdfpage{spec/PushStreamCrossPortRestrictions}{5}
\tlapluspdfpage{spec/PushStreamCrossPortRestrictions}{6}
\tlapluspdfpage{spec/PushStreamCrossPortRestrictions}{7}
\tlapluspdfpage{spec/PushStreamCrossPortRestrictions}{8}
\tlapluspdfpage{spec/PushStreamCrossPortRestrictions}{9}

\newpage
\section{Cross-port restrictions refine input-output port}
\label{appdx:refinement-cross-port-restrictions-to-io-port}

\tlapluspdfpage{spec/PushStreamCrossPortRestrictionsRefinesIOPort2}{1}

\newpage
\section{Abstract modules}
\label{appdx:abstract-modules-spec}

\tlapluspdfpage{spec/PushStreamAbstractModules}{1}
\tlapluspdfpage{spec/PushStreamAbstractModules}{2}
\tlapluspdfpage{spec/PushStreamAbstractModules}{3}
\tlapluspdfpage{spec/PushStreamAbstractModules}{4}
\tlapluspdfpage{spec/PushStreamAbstractModules}{5}
\tlapluspdfpage{spec/PushStreamAbstractModules}{6}
\tlapluspdfpage{spec/PushStreamAbstractModules}{7}
\tlapluspdfpage{spec/PushStreamAbstractModules}{8}
\tlapluspdfpage{spec/PushStreamAbstractModules}{9}
\tlapluspdfpage{spec/PushStreamAbstractModules}{10}
\tlapluspdfpage{spec/PushStreamAbstractModules}{11}
\tlapluspdfpage{spec/PushStreamAbstractModules}{12}
\tlapluspdfpage{spec/PushStreamAbstractModules}{13}
\tlapluspdfpage{spec/PushStreamAbstractModules}{14}
\tlapluspdfpage{spec/PushStreamAbstractModules}{15}
\tlapluspdfpage{spec/PushStreamAbstractModules}{16}
\tlapluspdfpage{spec/PushStreamAbstractModules}{17}
\tlapluspdfpage{spec/PushStreamAbstractModules}{18}
\tlapluspdfpage{spec/PushStreamAbstractModules}{19}
\tlapluspdfpage{spec/PushStreamAbstractModules}{20}

\newpage
\section{Equivalence between abstract modules and cross-port restrictions}
\label{appdx:abstract-modules-equiv}

\subsection{Abstract modules refine cross-port restrictions}
\label{appdx:abstract-modules-ref-crossports}
\tlapluspdfpage{spec/PushStreamAbstractModulesRefinesCrossPortRestrictions}{1}

\subsection{Cross-port restrictions refine abstract modules}
\label{appdx:crossports-ref-abstract-modules}
\tlapluspdfpage{spec/PushStreamCrossPortRestrictionsRefinesAbstractModules}{1}

\newpage
\section{Verifying concrete modules with abstract modules}
\label{appdx:verifying-concrete-modules}

\tlapluspdfpage{spec/PushStreamModuleChecker}{1}
\tlapluspdfpage{spec/PushStreamModuleChecker}{2}


\newpage
\section{Correctness of Specifications}
\label{sec:correctness}

A variety of detailed properties are listed at the end of all specifications in the previous Appendices. Some are equivalent to the properties listed in the protocol overview (Sec.~\ref{sec:protocol-overview}) and others are most specific. All were verified by the TLC model checker. 

A refinement is formalized by defining a mapping from every possible state of the input-output port specification to those of the abstract port. The refinements of Figure~\ref{fig:specification-overview}, shown either as implications or equivalence (mutual implication), are formalized in Appendices~\ref{appdx:refinement-io-port-to-abstract-port},~\ref{appdx:refinement-cross-port-restrictions-to-io-port}, and~\ref{appdx:abstract-modules-equiv}. The TLC model checker verifies that a refinement is valid by checking that the behaviour specification of the abstract port holds as a temporal property of all possible behaviours of the input-output port, under the refinement mapping (see video course~\cite{lamport2021tlaplus-video-course} and paper~\cite{lamport2017auxiliaryvariablestla}). 

For example, because of the refinement mapping, the input-output specification also meets the requirements met and the properties achieved by the abstract port specification.  By inspection, it is easy to verify, e.g., that invocation of the \texttt{write} method never occurs outside of the \textit{Writing} state, which prevents writing during pausing or termination; also, once terminating, it is not possible for the port to become paused, pausing or writing again (Prop~\ref{rq:irreversibility}). 

In the specification for cross-port restrictions (Appendix~\ref{appdx:cross-port-restrictions-spec}), all restricted actions are annotated with the property that contribute each restriction(s). Curious readers may re-trace some of our design steps by interactively querying the specification to provide example behaviours that are removed by the restrictions. They may do so by commenting some restriction conditions in the actions while asking TLC to verify the safety properties. TLC should report a temporal property violation with the name of the corresponding safety property, as well as an example behaviour that violates it. 

%

\newpage
\section{Overhead}
\label{sec:overhead}

Python offers many possibilities for processing a collection of values. This section compares them to pipelines of push-stream modules, both on their capabilities and performance overhead. We present the alternatives we considered, the methodology we used to compare them, our empirical results, and a discussion.

\subsection{Alternatives Compared}
\label{sec:alternatives}

The Python iterator protocol~\cite{python3iteratortype,yee2001python-iterator} is a reader-driven streaming protocol with only two methods: \texttt{\_\_iter\_\_()} that returns an object that implements the protocol (itself for an iterator object); and \texttt{\_\_next\_\_()} that returns the next value to process if available, or raises a \texttt{StopIteration} exception if not. Efficient processing pipelines can be created by composing functions from the \textit{itertools} library~\cite{python3itertools}: this will instantiate and connect iterator objects actually written in C. This approach is among the fastest available but lacks \textit{flow-control} (Prop.~\ref{rq:flow-control}) and \textit{graceful termination} (Prop.~\ref{rq:gracefulness}), i.e. propagation of termination to adjacent modules of the pipeline. We include it because it provides a useful performance target.

Python generators~\cite{schemenauer2001python-generators} are functions, which when called, return an iterator. They are distinguished from regular functions by the use the \texttt{yield} keyword to return values. Generators abstract the bookkeeping required to maintain state between requests for values, making iterators (coroutines) easier to write. An asynchronous version can be obtained by prefixing the function definition and for loop iterations in the caller with the \texttt{async} keyword~\cite{selivanov2015python-coroutine}. Generators can be faster than iterators written explicitly in Python but suffer the same limitation on flow-control and graceful termination because they are limited by the iterator protocol. Moreover, any generator that invokes an asynchronous generator, must either itself be asynchronous or must do extra bookkeeping.

Python supports invocation of subprocesses~\cite{python-subprocess}, both synchronously~\cite{python-sync-subprocess} and asynchronously~\cite{python-async-subprocess}, and may connect them with pipes on Unix. Pipeline modules may therefore be written in any language, and the pipeline orchestrated from Python, side-stepping most potential performance bottleneck that Python may introduce. However, synchronous subprocess creation precludes incrementally writing to the standard input of a process while incrementally reading from the standard output: the documentation explicitly mentions this may lead to deadlocks~\cite{python-sync-subprocess-deadlock}. Asynchronous subprocesses do not have this limitation and actually implement a subprocess protocol that mirrors the push-stream protocol~\cite{python-async-subprocess-protocol}, they therefore provide equivalent capabilities.\footnote{Asynchronous streaming protocols are some of the few concurrent building blocks in Python that have an explicit specification of their behaviour, as a state machine~\cite{python-async-streaming-protocol}. In our experience, they provide more reliable foundations for designing concurrent abstractions than many higher-level constructions.}

Reactive Python~\cite{reactive-python} is a popular library for creating producer-driven asynchronous streaming pipelines, with similar functional operators as itertools. It can also interoperate with the asyncio event loop, and therefore use asynchronous subprocesses or asynchronous generators as event generators. However, it does not support flow-control (which the author calls backpressure~\cite{reactive-python-doc-backpressure}), and orchestrates all operations with an event loop. Its synchronization protocol is therefore less expressive than push-stream but its performance profile provides an interesting reference point to what others have achieved purely in Python. We compare to the latest release, 4.1.0.

\subsection{Methodology}
We compare all alternatives on a simple processing pipeline, chosen to emphasize the performance cost of the protocol and runtime environment, e.g. even loop, garbage collector, etc.. Specifically, the modules do very little so that most of the execution time is spent in synchronization and activation of the different modules. For each alternative compared, we create a pipeline composed of the equivalent of a \texttt{Values} producer that outputs a sequence of increasing integers connected to a \texttt{Count} consumer (Table~\ref{tb:concrete-modules}). We also compare to a pipeline with an additional \texttt{Map} or \texttt{AsyncMap} transformer to measure the marginal cost of adding a transformer module that performs synchronous or asynchronous processing. 

We first measure the initialization time as libraries and the runtime system appear to lazily initialize, which makes the overhead for the first value orders of magnitude larger than for the next. We measure this time by executing the entire pipeline with a single value and report on the time needed to both create the pipeline, process the value, and terminate.

We also measure the time required for the entire pipeline to process increasingly larger collections of values by orders of magnitude, i.e. 1, 10, 100, etc. We report the normalized time, i.e. average time per value processed, at a scale where it stabilizes and the effect of initialization becomes negligible. We repeat the measurements ten times and report the average over the ten runs.

All our measurements were made on a Macbook Pro 13-inch 2019 with a 2.8GHz Intel Core i7 processor, with 16 GB of 2133MHz of LPDDR3 of memory, running MacOS Sequoia 15.7.4 and Python 3.12.9.

\subsection{Results}

We present our results in three groups: synchronous and Unix pipelines, which are the fastest; asynchronous which are slower; and various configurations of push-stream pipelines.

Figure~\ref{fig:results-sync} compares push-stream modules using only synchronous writing to iterators and generators, as well as to a Unix pipeline running as two subprocesses connected by a pipe and orchestrated by Python.

\begin{figure}[htbp]
\includegraphics[width=1.0\textwidth,page=1,trim=6cm 63cm 30cm 4cm, clip=true]{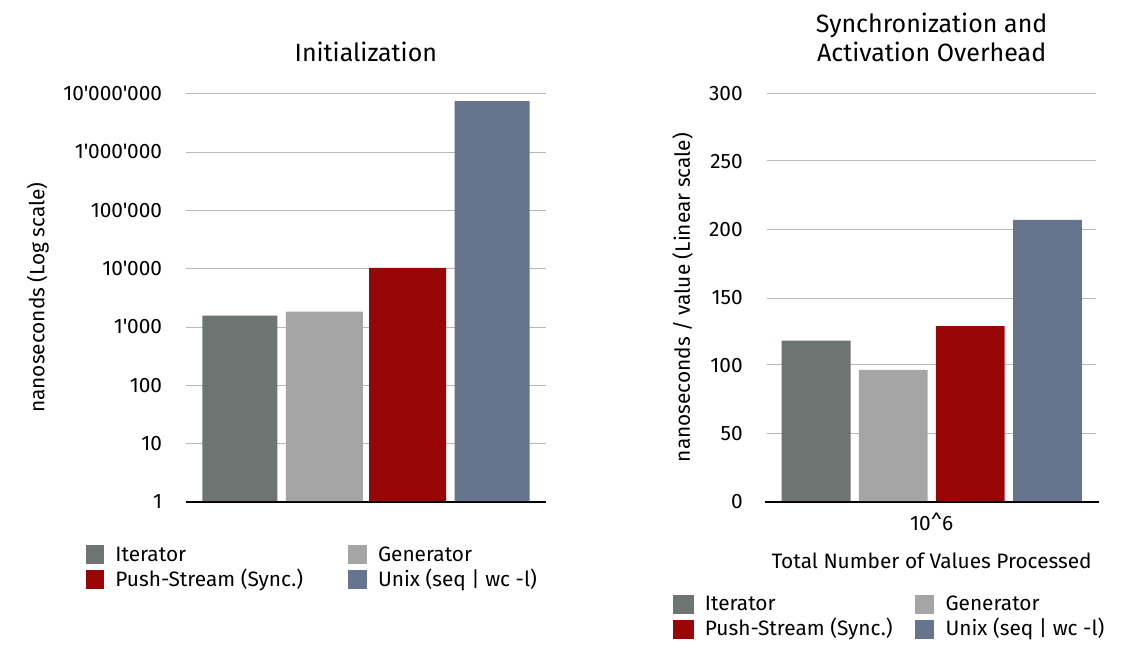}
\caption{\label{fig:results-sync} Comparison of push-stream's synchronous overhead to Python's built-in iterator and generator, and an equivalent Unix pipeline with a single pipe. Lower is better.}
\end{figure}

 The initialization time for iterators and generators is similar and the fastest, at one microsecond. The initialization of a push-stream pipeline is an order of magnitude slower, at ten microseconds. The initialization of a Unix pipeline is almost three orders of magnitude slower than for a push-stream at ten milliseconds.
 
 The synchronization and activation overhead stabilizes around $10^6$ values processed. The overhead of the push-stream protocol is similar to that of the iterator protocol, and roughly 25\% slower than iterators obtained with generators. The overhead of synchronizing through pipes for Unix pipelines is about 40\% higher.

Figure~\ref{fig:results-async} compares a push stream pipeline with an additional AsyncMap module (Table~\ref{tb:concrete-modules}) to equivalent pipelines written with Python async generators and Reactive Python. The push stream asynchronous map module is implemented with a long-running asynchronous task that reads a new future, i.e. single assignment variable object with synchronization operations~\cite{asyncio-rebooted}, for every new value.\footnote{Our first version of the push stream async map instead created a new task for every value and was 7.7 times slower.} The initialization time for push-stream is approximately in the middle in-between the built-in async generator and a reactive Python pipeline. The synchronization overhead for push-stream is a 2.5 times higher than the async generator but is more than 10 times lower than that of reactive Python.

\begin{figure}[htbp]
\includegraphics[width=1.0\textwidth,page=1,trim=6cm 51.5cm 30cm 15.5cm, clip=true]{figures/results}
\caption{\label{fig:results-async} Comparison of push-stream's asynchronous overhead to Python's built-in asynchronous generator and RxPython with a single map operator. Lower is better.}
\end{figure}

Figure~\ref{fig:results-push-stream-variations} shows how the overhead increases as different features of the push-stream protocol are used. Adding a synchronous map module triples the initialization time (scale is logarithmic) and multiplies the synchronization overhead by 1.5. Adding an asynchronous map module increases the initialization time by almost two orders of magnitude and multiplies the synchronization overhead by 8. Creating a push stream pipeline of subprocess modules communicating through the push-stream protocol, with a module in-between that parses and splits standard output lines before passing them to the next subprocess, increases the initialization time by three orders of magnitude and the synchronization overhead by 8. Compared to the previous Unix pipeline that connects both processes with a pipe (Figure~\ref{fig:results-sync}), the initialization time is similar but the synchronization overhead is 5 times larger.

\begin{figure}[htbp]
\includegraphics[width=1.0\textwidth,page=1,trim=5.5cm 37.5cm 30cm 29cm, clip=true]{figures/results}
\caption{\label{fig:results-push-stream-variations} Marginal cost of adding an additional synchronous module, asynchronous module, or subprocess delegation in push-stream pipelines.  Lower is better.}
\end{figure}

\subsection{Discussion}
In synchronous mode, a push-stream pipeline has similar overhead as iterators objects and generators, but with almost an order of magnitude larger initialization time. The difference in initialization comes from the construction of the pipeline. The synchronization overhead is similar because it boils down to the invocation of the \texttt{write} method. It surprises us that the method call overhead in Python used to implement the protocol turns out lower than the synchronization overhead for Unix processes to communicate over pipes, i.e. bounded buffers~\cite{dijkstra123coop-seq-procs} that behave as files. This is explained by the fact that a method call within the same process has less overhead than inter-process synchronization through pipes. In combination with the initialization time that is three orders of magnitude smaller, this means that processing pipelines with low amount of computation per value and a medium number of values to process (less than $10^4$) may be faster to execute directly in Python than by spanning a Unix pipeline.

In asynchronous mode, a push-stream pipeline has 2.5 times the overhead of a chain of async generators but this appears to be the cost for the ability to intermix easily with asynchronous and synchronous modules, which adds the overhead of at least the \texttt{write} method call in addition to performing the asynchronous protocol from outside an asynchronous context. Nonetheless, this extra overhead is an order of magnitude less than for reactive python. That suggests reactive Python could add flow-control and lower the overhead of reactive python pipelines significantly by using our protocol.

Our results are obviously unrepresentative of actual workloads because the pipelines are not doing any useful work. The practicality of each alternative should be assessed by comparing the protocols' overhead to the application-specific processing time per value. This may reveal that for certain applications with high processing time per value, in the end, the performance difference between alternatives may not matter much. For applications that do, our push-stream protocol provides additional flow-control and graceful termination at an overhead within a factor of 1-3 of built-in protocols.

\newpage
\printbibliography

\end{document}